\title{Pre-Hamiltonian operators}
\author{S. Startsev}
\date{}
\def\pa{\partial}                 
\def\be{\begin{equation}}       \def\ba{\begin{array}}
\def\ee{\end{equation}}         \def\ea{\end{array}}
\def\im    {\mathop{\rm Im} \nolimits}
\numberwithin{equation}{section}
\def\Z{\mathbb{Z}}
\let\ds\displaystyle
\theoremstyle{plain}
\newtheorem{theorem}{Theorem}[section]
\newtheorem{lemma}[theorem]{Lemma}
\newtheorem{corollary}[theorem]{Corollary}
\newtheorem{proposition}[theorem]{Proposition}
\theoremstyle{remark}
\newtheorem{remark}[theorem]{Remark}
\newtheorem{example}[theorem]{Example}
\theoremstyle{definition}
\newtheorem{definition}[theorem]{Definition}
\begin{document}

\baselineskip=15pt
\vspace{1cm} \centerline{{\LARGE \textbf {Pre-Hamiltonian operators related to   
 }}}
\vspace{0.3cm} \centerline{{\LARGE \textbf {hyperbolic equations of Liouville type 
 }}}

\vskip1cm \hfill
\begin{minipage}{16.5cm}
\baselineskip=15pt {\bf S. Ya. Startsev}
\\ [2ex] {\footnotesize
Institute of Mathematics, Ufa Federal Research Centre, Russian Academy of Sciences  
\\}
\vskip1cm{\bf Abstract.} This text is devoted to hyperbolic equations admitting differential operators that map any function of one independent variable into a symmetry of the corresponding equation. We use the term `symmetry driver' for such operators and prove that any symmetry driver of the smallest order is pre-Hamiltonian (i.e., the image of the driver is closed with respect to the standard bracket). This allows us to prove that the composition of a symmetry driver with the Fr\'echet derivative of an integral is also pre-Hamiltonian (in a new set of the variables) if both the symmetry driver and the integral have the smallest orders.

\end{minipage}

\bigskip

\newpage

 \section{Introduction: Pre-Hamiltonian operators}\label{preg}
 
Suppose that $u$ is a function of $x$ and $y$. Let ${\cal F}$ denote the set of all functions depending on a finite number of the variables
\begin{equation}\label{ydynvar}
y, \quad x, \quad u_0 = u, \quad u_1 = u_x, \quad u_2 = u_ {xx}, \quad \dots\,.
\end{equation}
Notice that \eqref{ydynvar} does not contain derivatives of $u$ with respect to $y$. Moreover, the objects defined in this Section do not employ the variable $y$, and this variable serves here as a parameter (which is absent in the standard definitions). The origin of the parameter $y$ will be explained in Section~\ref{hypsec}.  

By $D$ we denote the total derivative with respect to $x$. On functions from $\cal F$, it is defined by the formula
\begin{equation} \label{DxShort}
D = \frac{\pa}{\pa x}+\sum_{i = 0}^{+\infty} u_{i + 1} \frac{\partial}{\partial u_i}.
\end{equation}
For any function $a \in {\cal F}$, the differential operator
\begin{equation*}
a_* \stackrel{def} {=} \sum_{i=0}^{+\infty} \frac{\partial a}{\partial u_i} D^i
\end{equation*}
is called the \emph{Fr\'echet derivative} of $a$.

For any $a \in {\cal F}$ we can also consider\footnote{In the formal sense, just as a way to define new differentiations on $\cal F$.} the total derivative $\partial_a$ with respect to $t$ in virtue (i.e. on solutions) of the evolution equation $u_t = a$. It easy to check that $\partial_a (h) = h_* (a)$ if $h \in {\cal F}$. The commutator $[\pa_f, \, \pa_g]$ of such derivatives corresponds to the Lie bracket
\begin {equation}\label{Lie}
\Big [f, \, \, g \Big] \stackrel{def}{=} g_* \, (f) -f_ * \, (g)
\end {equation}
on ${\cal F}$. Namely, it is well-known (see, for example,  \cite{Olv93}) that  
\begin{equation}\label{comm}
[\pa_f, \, \pa_g] = \pa_{[f,g]}.
\end{equation}

Let $M$ be a differential operator of the form
\begin{equation} \label{m1}
M=\sum_{i=0}^{k} \xi_i \,  D^i,  \qquad \xi_i \in {\cal F}, \qquad k \ge 0.
\end {equation}
The direct calculation shows that $D(a)_* = D \circ a_*$ for any $a \in {\cal F}$, where the symbol $\circ$ denotes the composition of differential operators. Taking this into account, we see that the relation 
\[ \Big[ M(a), M(b) \Big]= M\left( b_*(M(a))-a_*(M(b)) \right) + 
\sum_{i=0}^k \left( D^i(b) (\xi_i)_* (M(a)) - D^i(a) (\xi_i)_* (M(b))\right) \]
holds for any differential operator~\eqref{m1} and any functions $a,b \in {\cal F}$. Denoting the highest of the orders of the operators $(\xi_i)_* \circ M$ by $m$, we can rewrite this relation in the form 
\begin{equation}\label{sip}
\Big[ M(a), M(b) \Big] - M\left(b_* (M(a))-a_* (M(b))\right) = \sum_{i=0}^m \sum_{j=0}^m c_{ij}  D^i(a) D^j(b), 
\end{equation}
where the functions  $c_{ij}\in {\cal{F}}$ do not depend on $a$ and $b$. 

Generally speaking, the right-hand side of \eqref{sip} does not belong to the image of $M$ for all $a$ and $b$. But, for example, if $M=D+u_1$, then 
\begin{equation}\label{ldr}
\Big [M(a), \, M(b) \Big] = M \Big ( b_*(M(a)) - a_*(M(b))+D (a) \, b - D (b) \, a \Big)
\end{equation}
for any $a, b \in {\cal F}$. 

\begin {definition}
A differential operator $M$ of the form \eqref{m1} is called \emph{pre-Hamiltonian} if for any $a, b \in {\cal F}$ there exists $\vartheta \in {\cal F}$ such that $\Big [M(a), \, M(b) \Big] = M(\vartheta)$.
\end {definition}
It is known (see \cite {Olv93} for example) that the image $\im {\cal H}$ of an operator ${\cal H}$ satisfies the relation $\Big [\im {\cal H}, \, \im {\cal H} \Big] \subseteq \im {\cal H}$ (i.e., $\im {\cal H}$ forms a Lie subalgebra in ${\cal F}$) if the operator $\cal H$ is Hamiltonian. Thus, we can consider pre-Hamiltonian operators as a generalization of Hamiltonian operators. (Note that the pre-Hamiltonian operator $D+u_1$ in the above example is not Hamiltonian because it is not skew-symmetric.) 

Pre-Hamiltonian operators (under different names or without a name) were studied in several works\footnote{References to these works will be added in the next version of this text; 
a part of them are mentioned in subsection~3.5.2 of https://arxiv.org/abs/1711.10624} of this century.
As far as the author knows, the definition of pre-Hamiltonian operators actually arose for the first time in subsection~7.1 of \cite{zibsok} as an `experimentally observed' remarkable property of operators associated with hyperbolic equations of the Liouville type (i.e., with Darboux integrable equations).

The present text can be considered as an addition to these `experimental observations' of \cite{zibsok} and provides them with the proof. Namely, we prove in Section~\ref{phdarb} that any Darboux integrable equation generates four pre-Hamiltonian operators (a couple of such operators per each of the characteristics).

\section{Darboux integrable hyperbolic equations. Notation, definitions and known facts}\label{hypsec}

\subsection{Notation}\label{hnot}
From now on, we deal with hyperbolic equations of the form
\begin {equation}
\label {hyper} u_ {xy} = F (x, y, u, u_x, u_y).
\end {equation}
If $u(x,y)$ is a solution of equation (\ref{hyper}), then we can express all mixed derivatives of $u$ in terms of the variables 
\begin{equation}\label{hypdyn}
x, \quad y,\quad u_0 = \bar u_0 = u, \quad u_1 = u_x, \quad \bar u_1 = u_y, \quad u_2 = u_ {xx}, \quad \bar
u_2 = u_ {yy}, \dots \, \,.
\end{equation}
These variables will be considered as independent. We use notation $f\langle u \rangle$ for a function $f$ depending on a finite number of the variables (\ref{hypdyn}). We also use the notation of Section~\ref{preg} and, in particular, continue to write $f \in {\cal F}$ if $f\langle u \rangle$ does not depend on $\bar{u}_j$ for all $j>0$. 

For equations \eqref{hyper} there is a duality in notation between objects related to $x$ and $y$-characteristics. If we denote an $x$-object by a symbol, then the corresponding $y$-object is denoted by the same symbol with a dash above. For example, by $D$ we denote  the total derivative with respect to $x$ defined on  solutions of equation \eqref{hyper}, while the total derivative with respect to $y$ is denoted by $\bar D$. These derivatives  are defined by formulas
$$
D = \frac {\partial} {\partial x} + \sum_ {i = 0} ^ {\infty} u_ {i + 1}
\frac {\partial} {\partial u_i} + \sum_ {i = 1} ^ {\infty} \bar D ^ {i-1} (F)
\frac {\partial}
{\partial \bar u_i}, $$
$$ \bar D = \frac {\partial} {
\partial y} + \sum_ {i = 0} ^ {\infty} \bar u_ {i + 1} \frac {\partial} {\partial
\bar u_i} + \sum_ {i = 1} ^ {\infty} D ^ {i-1} (F) \frac {\partial} {
\partial u_i}. $$
Note that the restriction of $D$ onto ${\cal F}$ coincides with \eqref{DxShort}. 

\subsection{Darboux integrable hyperbolic equations}\label{darbsec}

\begin{definition}\label{defint} A function $W \langle u \rangle$ is called an $x$-{\it integral} of equation \eqref{hyper} if $\bar D(W)=0$.  Any function $ W (x) $ is called a {\it trivial} $x$-integral.  
\end{definition}
If we replace $x$ with $y$ (and $\bar D$ with $D$) in the above definition, then we obtain the definition of $y$-integrals\footnote{Notice that in different papers (even in articles of the same authors) the names for the integrals are varied: in a part of the papers, the term `$x$-integral' is used to denote $y$-integrals and vice versa.}. Using the symmetry of formula~\eqref{hyper} with respect to the interchange $x \leftrightarrow y$, we hereafter give only one of two `symmetric' definitions and statements.

Differentiating the defining relation $\bar D (W)=0$ with respect to the highest variable of the form $\bar u_i$, we obtain that any $x$-integral $ W $ does not depend on $\bar u_i$ for all $i>0$ and, hence, it has the form
$$ W = W (x, y, u, u_1, u_2, \ldots, u_p). $$
The number $p$ is called {\it order} of the $x$-integral $W$.

\begin{example} The function $\ds w=u_2- \frac {1} {2} u_1 ^ 2$ is an $x$-integral of the Liouville equation 
\begin{equation}\label{liou}
u_{xy}=\exp{u}.
\end{equation}
\end{example}

 Obviously, if $w$ is an $x$-integral of equation~\eqref{hyper}, then the expression
 \begin {equation} \label{genint1}
 W = S \left (x, w, D (w), \dots, D ^ j (w) \right)
 \end {equation}
 is also an $x$-integral of~\eqref{hyper} for any function $S$ and any $j \ge 0$.

\begin{proposition}\label{intst}{\rm \cite{zib}}
Any $x$-integral $W$ can be represented in the form \eqref{genint1}, where $w$ is an $x$-integral of the smallest order. In particular, $w=\phi (x, \tilde{w})$ if $\tilde{w}$ is another $x$-integral of the smallest order.
\end {proposition}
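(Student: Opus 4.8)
The plan is to induct on the order $p$ of the $x$-integral $W$, using the minimal-order integral $w$ (of order $n$, with $\partial w/\partial u_n\neq 0$ on the domain of interest) to strip off the top derivative one step at a time. Two facts make the induction run. First, since the total derivatives commute, $[D,\bar D]=0$, so each $D^j(w)$ is again an $x$-integral, of order $n+j$, whose leading coefficient $\partial (D^jw)/\partial u_{n+j}=\partial w/\partial u_n$ is nonzero. Second, an $x$-integral of order $<n$ can only be a trivial one: an integral of order $0$ satisfies $W_y+\bar u_1 W_{u_0}=0$, which, read as an identity in the free variable $\bar u_1=u_y$, forces $W=W(x)$, while minimality of $n$ rules out genuine integrals of orders $1,\dots,n-1$. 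This is the base of the induction, and it already delivers the desired ($y$-free) form.

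For the inductive step I would fix $p\ge n$, set $\omega:=D^{\,p-n}(w)$, and use $\partial\omega/\partial u_p=\partial w/\partial u_n\neq 0$ to perform the change of variables $(x,y,u_0,\dots,u_{p-1},u_p)\mapsto(x,y,u_0,\dots,u_{p-1},\omega)$, writing $W=\hat W(x,y,u_0,\dots,u_{p-1},\omega)$. Applying $\bar D$ and using $\bar D(x)=0$, $\bar D(y)=1$ and $\bar D(\omega)=0$ (the latter because $\omega$ is an integral), every reference to $u_p$ disappears and the defining relation $\bar D(W)=0$ collapses to
\[
\hat W_y+\bar u_1\,\hat W_{u_0}+\sum_{i=1}^{p-1}D^{\,i-1}(F)\,\hat W_{u_i}=0 .
\]
The key observation is that the coefficients $\bar u_1,D^{0}(F),\dots,D^{p-2}(F)$ depend only on $u_0,\dots,u_{p-1}$ and on $\bar u_1$ (one checks by induction that every $D^{i-1}(F)$ involves the barred variables only through $\bar u_1$), so they are untouched by the substitution and carry no $\omega$. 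Hence, for each frozen value of the parameter $\omega$, the function $\hat W(\,\cdot\,;\omega)$ is literally an $x$-integral of the \emph{same} equation, of order at most $p-1$.

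At this point the induction hypothesis applies to $\hat W(\,\cdot\,;\omega)$ for each $\omega$ and yields $\hat W=\Sigma\big(x,w,D(w),\dots,D^{\,p-1-n}(w);\omega\big)$; restoring $\omega=D^{\,p-n}(w)$ gives exactly $W=S\big(x,w,\dots,D^{\,p-n}(w)\big)$, and the absence of $y$ propagates automatically from the base case. The main obstacle I anticipate is precisely the passage ``for each fixed $\omega$'': one must verify that the integration constants produced by the induction hypothesis can be chosen to depend regularly on the parameter $\omega$, i.e.\ that the inductive statement holds with parameters rather than merely pointwise in $\omega$. This is the only step requiring genuine care; the remainder is bookkeeping about orders and leading coefficients.

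Finally, the concluding assertion follows in one line. If $\tilde w$ is another $x$-integral of the minimal order $n$, the representation just established gives $\tilde w=S(x,w,\dots,D^j w)$; but $D^i(w)$ has order $n+i$ with nonvanishing leading coefficient, so the right-hand side has order $n$ only if $S$ is independent of $D^i(w)$ for all $i\ge 1$. Thus $\tilde w=S(x,w)$ with $\partial S/\partial w\neq 0$, which inverts to $w=\phi(x,\tilde w)$.
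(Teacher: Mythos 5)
Your argument is essentially sound, but note that the paper contains no internal proof of Proposition~\ref{intst}: it is quoted from Zhiber~\cite{zib}. The classical route there is via characteristic vector fields: the condition $\bar D(W)=0$, read as an identity in $\bar u_1$ (the only barred variable entering the coefficients $D^{i-1}(F)$, as you correctly verify), says that $W$ is a joint invariant of an involutive family of vector fields on the finite-dimensional space of the variables $x,y,u_0,\dots,u_p$; since $x,w,D(w),\dots,D^{p-n}(w)$ form a maximal functionally independent set of invariants, the Frobenius theorem represents every invariant as a function of them, with smooth dependence built in. Your inductive substitution $u_p\mapsto \omega=D^{p-n}(w)$ reaches the same conclusion by more elementary means, and the supporting computations are all correct: $\bar D(\omega)=0$ because total derivatives commute; the coefficients $D^{i-1}(F)$ with $i\le p-1$ do not involve $u_p$ and involve barred variables only through $\bar u_1$, so the collapsed relation is literally the defining relation of an $x$-integral of order $\le p-1$ for each frozen $\omega$; the leading-coefficient identity $\partial D^{j}(w)/\partial u_{n+j}=\partial w/\partial u_n\neq 0$ justifies both the change of variables and the one-line deduction of $w=\phi(x,\tilde w)$ at the end.

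The one issue you flag --- regular dependence on the frozen parameter $\omega$ --- is a real gap as stated, but it closes immediately once you formulate the induction parametrically from the start: prove that every jointly smooth family $W(\,\cdot\,;\lambda)$ of $x$-integrals of order $\le p$ has the form $S\bigl(x,w,D(w),\dots,D^{p-n}(w);\lambda\bigr)$ with $S$ jointly smooth. The base case with parameters is automatic, since $W_y+\bar u_1 W_{u_0}=0$, identically in $\bar u_1$, forces $W=W(x;\lambda)$ with joint smoothness inherited from $W$ itself; and your inductive step is parameter-inert, because the change of variables uses only the fixed function $D^{p-n}(w)$, so $\hat W$ is jointly smooth in the old parameters together with the new one $\omega$. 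With this restatement no ``integration constants'' ever appear and the induction closes, giving a complete local proof (on the open set where $\partial w/\partial u_n\neq 0$ --- the same genericity assumption the Frobenius-based argument requires). Compared with the classical approach, your version trades the machinery of involutive distributions for explicit bookkeeping, which is arguably more transparent, while the Frobenius route gets the parameter-smoothness for free.
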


\begin {definition} \label {Darb} An equation of the form \eqref {hyper} is said to be {\it Darboux integrable} if this equation possesses both nontrivial $x$ and $y$-integrals.
\end {definition}
The Liouville equation~\eqref{liou} is the most known example of nonlinear Darboux integrable equation. 

\subsection{Hyperbolic equations of Liouville type }
Let us introduce the functions
 $$ H_1 \stackrel {def} {=} - D \left (\frac {\partial
F} {\partial u_1} \right) + \frac {\partial F} {\partial u_1}
\frac {\partial F} {\partial \bar u_1} + \frac {\partial F} {\partial
u}, \qquad
H_0 \stackrel {def} {=} - \bar D \left (\frac {\partial F} {\partial \bar
u_1} \right) + \frac {\partial F} {\partial u_1} \frac {\partial
F} {\partial \bar u_1} + \frac {\partial F} {\partial u}.
$$
Then we can define the functions $H_i$ for $i>1$ and $i<0$ by the recurrent formula
\begin {equation} \label {TodaH}
D \bar D (\log H_i) = - H_ {i + 1} -H_ {i-1} +2 H_i, \qquad i \in \Z.
\end {equation}
The functions $H_i$ are called \emph{Laplace invariants of equation} \eqref {hyper}.

The direct calculation shows that $H_0 = H_1 = \exp {u}$ and $H_2 = H_{-1} = 0$ for the Liouville equation \eqref{liou}.

\begin {definition}\label{LioInt} Equation \eqref{hyper} is called a \emph{Liouville-type equation} if $H_r = H _ {- s} \equiv 0$ for some integers $r \ge 1$ and $s \ge 0$.
\end {definition}

\begin {theorem}\label{led} {\rm \cite{zibsok1, andkam1}} An equation of the form~\eqref{hyper} is a Liouville-type equation if and only if this equation admits both a non-trivial $x$-integrals $W (x, y, u, u_1, u_2, \ldots, u_p)$ and a non-trivial $y$-integral $\bar W (x, y, u, \bar u_1, \bar u_2, \ldots, \bar u _ {\bar p})$. In addition, $s<p$ and $r \le \bar{p}$. 
\end {theorem}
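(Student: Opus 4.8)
The plan is to recognize \eqref{TodaH} as the recurrence governing the Laplace transformations of the linearization of \eqref{hyper}, and thereby to reduce the theorem to the classical fact that the Laplace cascade terminates in a given characteristic direction exactly when there is an integral along that characteristic. First I would linearize \eqref{hyper}: the Fr\'echet derivative of its right-hand side yields the second-order hyperbolic operator
\begin{equation*}
L=D\bar D-\frac{\partial F}{\partial u_1}\,D-\frac{\partial F}{\partial\bar u_1}\,\bar D-\frac{\partial F}{\partial u},
\end{equation*}
and a direct computation shows that $H_1$ and $H_0$ are precisely its two Laplace invariants, i.e. $L=(\bar D+a)(D+b)-H_0=(D+b)(\bar D+a)-H_1$ for suitable $a,b\in{\cal F}$. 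One then checks that the standard Laplace transformation sends $L$ to an operator whose invariants are the neighbours in the chain $\{H_i\}_{i\in\Z}$, and that the law relating successive invariants is exactly \eqref{TodaH}. This embeds the whole question into the theory of the Laplace cascade.

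The core of the argument is a correspondence lemma, which I would prove for one characteristic and transfer to the other by duality: \emph{equation \eqref{hyper} possesses a nontrivial $y$-integral of order $\bar p$ if and only if $H_r\equiv 0$ for some integer $r$ with $1\le r\le\bar p$}, the smallest such index $r$ coinciding with the smallest order of a $y$-integral. The ``only if'' part is a descent: starting from a $y$-integral of order $\bar p$ and applying Laplace transformations, each nonvanishing invariant lets one pass to the transformed operator while lowering the order of the accompanying integral by one; since a nontrivial integral has order at least one, after finitely many steps some invariant $H_r$ with $r\le\bar p$ must vanish. The ``if'' part is the construction: when $H_r\equiv0$ the $r$-th Laplace transform of $L$ factors with a bare first-order factor on the appropriate side, so the associated first-order equation can be integrated; carrying this integration back through the $r$ Laplace substitutions produces a function of order $r$ that is annihilated by $D$, that is, a nontrivial $y$-integral. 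By Proposition~\ref{intst} it suffices to control this minimal-order integral.

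With the $y$-side lemma in hand, I would obtain the $x$-side at once from the $x\leftrightarrow y$ duality fixed in Section~\ref{hypsec}: it swaps $D\leftrightarrow\bar D$ and acts on the chain by the reflection $H_i\mapsto H_{1-i}$, which one verifies is compatible with \eqref{TodaH}. Thus ``$y$-integral of order $\bar p$ $\Leftrightarrow$ $H_r\equiv0,\ r\le\bar p$'' becomes ``$x$-integral of order $p$ $\Leftrightarrow$ $H_{-s}\equiv0,\ s+1\le p$'' with $s=r-1$. This reflection is the source of the asymmetric bookkeeping in the statement: it carries the base index $1$ to $0$, turning $r\ge1$ into $s\ge0$ and the bound $r\le\bar p$ into the strict bound $s<p$. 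Assembling the two correspondences with Definition~\ref{LioInt} finishes the proof: by definition \eqref{hyper} is of Liouville type iff $H_r\equiv H_{-s}\equiv0$ for some $r\ge1$, $s\ge0$, which by the two lemmas is equivalent to the simultaneous existence of nontrivial $y$- and $x$-integrals, i.e. to Darboux integrability, and the inequalities $s<p$ and $r\le\bar p$ are read off directly.

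I expect the main obstacle to be the sharp order bookkeeping inside the correspondence lemma: proving that a single Laplace step changes the minimal integral order by \emph{exactly} one (rather than merely bounding it), and verifying that the construction direction really returns a genuine integral of the predicted order rather than a formal expression, are the delicate points, and they are most cleanly handled by tracking the Fr\'echet derivative of the minimal-order integral and its factorization through the cascade. Getting this precise in both directions is also what pins down the one-step index shift between the two characteristics, and hence the difference between the strict inequality $s<p$ and the non-strict $r\le\bar p$.
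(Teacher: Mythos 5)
The paper does not prove Theorem~\ref{led} at all: it is imported from \cite{zibsok1,andkam1}, so there is no internal proof to compare you against. What you have written is a reconstruction of the classical route that those references actually take: linearize \eqref{hyper}, recognize $H_1$ and $H_0$ as the Laplace invariants of the linearized operator $L=D\bar D - F_{u_1}D - F_{\bar u_1}\bar D - F_u$ (your factorizations are right up to sign conventions, since $L=(\bar D - F_{u_1})(D-F_{\bar u_1})-H_0=(D-F_{\bar u_1})(\bar D - F_{u_1})-H_1$), identify \eqref{TodaH} as the cascade law, and treat each characteristic separately, transferring by the $x\leftrightarrow y$ duality. Your pairing ($H_r$, $r\ge 1$, with $y$-integrals; $H_{-s}$, $s\ge 0$, with $x$-integrals) and the index bookkeeping via the reflection $H_i\mapsto H_{1-i}$ (which fixes the recurrence because $D$ and $\bar D$ commute) are correct, and they pass the sanity checks available inside the paper: for the Liouville equation $H_2=H_{-1}=0$ with $p=\bar p=2$, giving $s=1<p$ and $r=2\le\bar p$, and for Example~\ref{dni} one has $H_1=0$ together with a first-order $y$-integral while $x$-integrals are generically absent, exactly as your one-sided lemma predicts.

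The genuine gap is in the ``if'' direction of your correspondence lemma, and you have only flagged it rather than closed it. When $H_r\equiv 0$, integrating the bare first-order factor of the $r$-th Laplace transform produces elements of the kernel of a \emph{linear} operator with coefficients on the jet space, i.e., expressions depending on arbitrary functions and, a priori, on the solution $u$ in an uncontrolled way; the substantive content of \cite{zibsok1,andkam1} is precisely to show that this data assembles into a single function $\bar W$ of \emph{finitely many} variables $x,y,u,\bar u_1,\dots,\bar u_{\bar p}$ satisfying $D(\bar W)=0$ identically on the jet variables, and ``carrying the integration back through the $r$ Laplace substitutions'' does not by itself deliver this. (Note also that within this paper's own machinery the analogous one-sided vanishing only yields symmetry drivers, via Theorem~\ref{invdr}, and even that needs the extra hypothesis $\psi\in\ker(\bar D-F_{u_1})$, which Corollary~\ref{psibarpsi} extracts \emph{from} an integral --- a hint that termination-to-integrals is not a formal consequence of factorization.) Two of your side claims are moreover stronger than the theorem and unjustified as stated: that each Laplace step lowers the minimal integral order by \emph{exactly} one, and that the smallest $r$ with $H_r\equiv 0$ \emph{equals} the smallest order of a $y$-integral. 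Theorem~\ref{led} asserts only the inequalities $r\le\bar p$ and $s<p$, and your descent argument needs only a drop of \emph{at least} one per step; I would excise the sharp equalities, since proving them is a separate (and harder) refinement not required here.
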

Thus, Definitions \ref{LioInt} and \ref{Darb} are equivalent for scalar hyperbolic equations.

Differentiating the defining relations $D(W)=0$ and $\bar D (\bar W)=0$ with respect to $u_p$ and $\bar u_{\bar p}$, respectively, we obtain the following statement.

\begin{corollary}\label{psibarpsi} If \eqref{hyper} is a Liouville-type equation, then the relations
\begin{equation}\label{dpsi}
\frac {\partial F} {\partial u_1} = \bar D \, \log
\psi (x, y, u, u_1, \dots, u_ {p}),  \qquad
\frac {\partial F} {\partial \bar
u_1} = D \, \log \bar \psi (x, y, u, \bar u_1, \dots, \bar u _ {\bar p})
\end{equation}
hold for some functions $\psi$ and $\bar \psi$.
\end{corollary}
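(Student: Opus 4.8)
The plan is to use the non-trivial $x$-integral $W(x,y,u,u_1,\dots,u_p)$ and $y$-integral $\bar W(x,y,u,\bar u_1,\dots,\bar u_{\bar p})$ supplied by Theorem~\ref{led}, and to differentiate each of their defining relations with respect to its top variable. I describe the argument for the first relation; the second is identical after the interchange $x\leftrightarrow y$. By Definition~\ref{defint} the $x$-integral obeys $\bar D(W)=0$, and since $W$ carries no $\bar u_j$ with $j>0$, expanding $\bar D$ collapses to
\begin{equation*}
\bar D(W)=\frac{\partial W}{\partial y}+\bar u_1\frac{\partial W}{\partial u}+\sum_{i=1}^{p}D^{i-1}(F)\,\frac{\partial W}{\partial u_i}=0 .
\end{equation*}

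First I would record the elementary order estimate for the coefficients: $D^{i-1}(F)$ has order exactly $i$ in the variables $u_j$, and its leading coefficient is $\partial F/\partial u_1$, that is
\begin{equation*}
\frac{\partial}{\partial u_p}D^{i-1}(F)=0 \quad (i<p), \qquad \frac{\partial}{\partial u_p}D^{p-1}(F)=\frac{\partial F}{\partial u_1}.
\end{equation*}
This follows by induction from the standard identity $\partial_{u_{n+1}}D(g)=\partial_{u_n}g$, valid for $g$ of order $n$ in the $u_j$; the $\bar u$-terms in $D$ do not raise this order, so they contribute nothing to $\partial_{u_{n+1}}$.

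Then I would apply $\partial/\partial u_p$ to the displayed relation $\bar D(W)=0$. Using commutativity of the mixed partials, every term in which $\partial_{u_p}$ falls on a factor $\partial W/\partial u_i$ recombines precisely into $\bar D(\partial W/\partial u_p)$, while the only term coming from differentiating a coefficient is $(\partial F/\partial u_1)(\partial W/\partial u_p)$, by the estimate above. Thus
\begin{equation*}
\bar D\!\left(\frac{\partial W}{\partial u_p}\right)+\frac{\partial F}{\partial u_1}\,\frac{\partial W}{\partial u_p}=0 .
\end{equation*}
Since $W$ has order exactly $p$, the factor $\partial W/\partial u_p$ is not identically zero, so dividing through gives $\partial F/\partial u_1=-\bar D\log(\partial W/\partial u_p)=\bar D\log\psi$ with $\psi=(\partial W/\partial u_p)^{-1}$, a function of $x,y,u,u_1,\dots,u_p$ as required; the companion relation follows by differentiating $D(\bar W)=0$ with respect to $\bar u_{\bar p}$ and taking $\bar\psi=(\partial\bar W/\partial\bar u_{\bar p})^{-1}$.

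The one step needing care is the recombination: one must check that, apart from the single coefficient-derivative term, all of the pieces produced by $\partial_{u_p}$ assemble exactly into $\bar D$ acting on $\partial W/\partial u_p$. This rests on the order estimate (so that no coefficient $D^{i-1}(F)$ with $i<p$ is differentiated) together with the observation that $\partial W/\partial u_p$ depends on the same variables $x,y,u,u_1,\dots,u_p$ as $W$, so $\bar D$ acts on it through the very same truncated sum. I expect no genuine difficulty beyond this bookkeeping.
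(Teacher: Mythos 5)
Your proof is correct and follows exactly the route the paper indicates (the paper's entire ``proof'' is the sentence preceding the corollary: differentiate the defining relations $\bar D(W)=0$ and $D(\bar W)=0$ with respect to $u_p$ and $\bar u_{\bar p}$); you have merely supplied the bookkeeping details, including the key facts $\partial_{u_p}D^{i-1}(F)=0$ for $i<p$ and $\partial_{u_p}D^{p-1}(F)=F_{u_1}$, and your sign convention $\psi=(\partial W/\partial u_p)^{-1}$ is consistent with the paper's requirement $\psi\in\ker(\bar D-F_{u_1})$.
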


\section{Pre-Hamiltonian operators related to Liouville-type equations}\label{phdarb} 

\subsection{Symmetries}\label{symmetries} 

An inherent feature of Darboux integrable equations is the existence of higher symmetries in both $x$ and $y$-directions. 
\begin{definition} An equation of the form
\begin{equation}\label{xsym}
u_t = g \langle u \rangle 
\end{equation}
is a \emph{symmetry} of equation \eqref{hyper} if  $g$ satisfies the following relation
$$\Big(D \bar D - \frac{\pa F}{\pa u_1} D - \frac{\pa F}{\pa \bar u_1} \bar D + \frac{\pa F}{\pa u} \Big) (g)=0.$$
If the right-hand side of a symmetry belongs to $\cal F$, then we call it an $x$-\emph{symmetry}\footnote{The right-hand side of $y$-symmetries depends on $x,y,u,\bar u_1, \bar u_2,...$}.
\end{definition}
We often identify symmetry \eqref{xsym} with its right-hand side $g$. The set of $x$-symmetries is a Lie algebra with respect to the bracket \eqref{Lie}.

The following formula generates symmetries for Liouville-type equations (see, for example, Theorem 5 in \cite{zibsok}, where this formula was proved but was given in the slightly different form \eqref{MMbp}).
\begin {theorem}\label{invdr} Suppose that  for equation~\eqref{hyper} there exists a non-zero function $\psi \langle u \rangle \in \ker (\bar D - F_{u_1})$ and  $H_r=0$ for some $r>0.$ Let us define the operator $\cal M$ by the formula
\begin {equation} \label{MM}
{\cal M} = 
\begin{cases}
\frac{1}{H_1} (D - F_{\bar u_1}) \circ \frac{1}{H_2} (D - F_{\bar u_1}) \circ \dots \circ \frac{1}{H_{r-1}} (D - F_{\bar u_1}) \circ H_{r-1} \dots H_1 \psi &\text{if $r>1$,} \\
\psi & \text{if $r=1$.}
\end{cases}
\end {equation}
Then $u_t = {\cal M} (W)$ is a symmetry of the equation~\eqref{hyper} for any $x$-integral $W$.
\end {theorem}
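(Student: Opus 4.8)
The plan is to exhibit $g := {\cal M}(W)$ as a solution of the symmetry condition $L(g)=0$, where $L$ denotes the second-order operator appearing in the definition of a symmetry of~\eqref{hyper}, by realizing ${\cal M}$ as a composition of elementary ``pull-back'' maps of a Laplace cascade and carrying one solution down this cascade. Throughout I write $\chi := D - F_{\bar u_1}$ and, for $0\le k\le r-1$, put $G_k := F_{u_1} + \bar D\log(H_1H_2\cdots H_k)$, so that $G_0 = F_{u_1}$; these are well defined because $H_1,\dots,H_{r-1}$ are nonzero, which is exactly what makes ${\cal M}$ and the recurrence~\eqref{TodaH} meaningful. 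A direct expansion of $\chi(\bar D - F_{u_1})$ and of $(\bar D - F_{u_1})\chi$, compared with the definitions of $H_0$ and $H_1$, yields the two \emph{Laplace factorizations}
\[
L \;=\; \chi\,(\bar D - F_{u_1}) - H_1 \;=\; (\bar D - F_{u_1})\,\chi - H_0 .
\]

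The technical core of the argument is to extend this to the family of auxiliary operators $L_k := \chi(\bar D - G_k) - H_{k+1}$ by proving the \emph{double-factorization lemma}
\[
L_k \;=\; \chi\,(\bar D - G_k) - H_{k+1} \;=\; (\bar D - G_k)\,\chi - H_k , \qquad 0\le k\le r-1 .
\]
The two forms differ by the commutator $(\bar D - G_k)\chi - \chi(\bar D - G_k) = D(G_k) - \bar D(F_{\bar u_1})$, so the lemma is equivalent to the identity $D(G_k) - \bar D(F_{\bar u_1}) = H_k - H_{k+1}$. This is the one place where the structure of the invariants is used in an essential, non-formal way, and I expect it to be the main obstacle: writing $D(G_k) = D(F_{u_1}) + \sum_{i=1}^{k} D\bar D\log H_i$ and inserting \eqref{TodaH} in the form $D\bar D\log H_i = 2H_i - H_{i+1} - H_{i-1}$, the sum telescopes to $-H_0 + H_1 + H_k - H_{k+1}$; the elementary relation $H_1 - H_0 = \bar D(F_{\bar u_1}) - D(F_{u_1})$, read directly off the definitions of $H_0$ and $H_1$, then cancels the remaining terms and leaves exactly $H_k - H_{k+1}$.

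With the lemma available the rest is a telescoping induction. For the base case I would use the hypothesis $H_r=0$: since $L_{r-1} = \chi(\bar D - G_{r-1}) - H_r = \chi(\bar D - G_{r-1})$, it suffices to check that the innermost function $g^{(r-1)} := (H_1\cdots H_{r-1}\,\psi)\,W$ is annihilated by $\bar D - G_{r-1}$, and this is a one-line computation: $\bar D W = 0$ together with the hypothesis $\psi\in\ker(\bar D - F_{u_1})$ (i.e. $\bar D\psi = F_{u_1}\psi$) leave only $\bar D\log(H_1\cdots H_{r-1})$, which is precisely the term subtracted in $G_{r-1}$. Next I would prove the \emph{pull-back step}: if $L_k g^{(k)} = 0$, then $g^{(k-1)} := \tfrac{1}{H_k}\chi\,g^{(k)}$ satisfies $L_{k-1}g^{(k-1)}=0$. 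This follows from the intertwining $(\bar D - G_{k-1})\tfrac{1}{H_k} = \tfrac{1}{H_k}(\bar D - G_k)$, immediate from $G_k = G_{k-1} + \bar D\log H_k$, together with the factorization $(\bar D - G_k)\chi\,g^{(k)} = L_k g^{(k)} + H_k g^{(k)} = H_k g^{(k)}$: these give $(\bar D - G_{k-1})g^{(k-1)} = g^{(k)}$, whence, using $L_{k-1} = \chi(\bar D - G_{k-1}) - H_k$, one obtains $L_{k-1}g^{(k-1)} = \chi g^{(k)} - \chi g^{(k)} = 0$.

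Finally, applying the pull-back step successively for $k = r-1, r-2,\dots,1$ transports the solution $g^{(r-1)}$ of $L_{r-1}$ into a solution
\[
g^{(0)} \;=\; \tfrac{1}{H_1}\chi\cdots\tfrac{1}{H_{r-1}}\chi\,g^{(r-1)} \;=\; {\cal M}(W)
\]
of $L_0 = L$, which is precisely the assertion that $u_t = {\cal M}(W)$ is a symmetry. The degenerate case $r=1$ is just the base case itself: then ${\cal M}=\psi$, $H_1=0$, so $L = \chi(\bar D - F_{u_1})$, and $(\bar D - F_{u_1})(\psi W) = \bar D(\psi)W + \psi\bar D(W) - F_{u_1}\psi W = 0$.
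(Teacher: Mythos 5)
Your proof is correct, but it cannot be ``the same as the paper's'' for a simple reason: the paper offers no proof of Theorem~\ref{invdr} at all, delegating it to Theorem 5 of \cite{zibsok}, where the operator appears in the equivalent form \eqref{MMbp}. What you have written is a self-contained reconstruction via the Laplace cascade, which is the method behind the cited result: your double-factorization lemma $L_k = \chi(\bar D - G_k) - H_{k+1} = (\bar D - G_k)\chi - H_k$ is precisely the statement that the operators obtained from $L$ by successive Laplace transformations in the $x$-direction have the invariants $H_{k+1}$ and $H_k$, and each intermediate $g^{(k)}$ solves the $k$-th transformed equation. I verified the essential computations: the commutator identity $(\bar D - G_k)\chi - \chi(\bar D - G_k) = D(G_k) - \bar D(F_{\bar u_1})$ (using $[D,\bar D]=0$ on the variables \eqref{hypdyn}), the telescoping of \eqref{TodaH} to $H_1 - H_{k+1} + H_k - H_0$, the cancellation via $H_1 - H_0 = \bar D(F_{\bar u_1}) - D(F_{u_1})$, the base case $(\bar D - G_{r-1})\bigl(H_1\cdots H_{r-1}\,\psi\,W\bigr) = 0$, and the pull-back step including the intertwining $(\bar D - G_{k-1})\circ \frac{1}{H_k} = \frac{1}{H_k}(\bar D - G_k)$; all are sound, and the composition of the pull-back maps reproduces \eqref{MM} exactly, with the degenerate case $r=1$ handled correctly. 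Two remarks. First, your factorizations hold for $L = D\bar D - F_{u_1} D - F_{\bar u_1}\bar D - F_u$, i.e.\ with the sign of $F_u$ as in the proof of Lemma~\ref{dss}; the $+F_u$ in the paper's definition of a symmetry is a typo (for the Liouville equation the translation $u_t = u_1$ satisfies $(D\bar D - e^u)(u_1)=0$, not the equation with $+e^u$), and you implicitly use the correct sign. Second, you rightly flag that $H_1,\dots,H_{r-1}\neq 0$ is built into the hypothesis, since otherwise neither \eqref{MM} nor the recurrence \eqref{TodaH} up to $H_r$ is meaningful. Compared with the bare citation, your argument buys the reader an explicit, purely operator-theoretic proof, and it yields slightly more than the theorem asserts, namely the chain of first-order relations $(\bar D - G_{k-1})(g^{(k-1)}) = g^{(k)}$ descending through the cascade.
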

The above Theorem is applicable to any Liouville-type equation because $H_r=0$ by Definition~\ref{LioInt} and the function $\psi$ exists by Corollary~\ref{psibarpsi}. In the case of Liouville-type equations (as well as in other cases when both relations~\eqref{dpsi} hold), we can rewrite~\eqref{MM} for $r>1$ in the form
\begin{equation}\label{MMbp}
{\cal M} =\frac {\bar \psi}{H_1} D \circ \frac 1 {H_2} D \circ \cdots \circ \frac 1 {H_ {r-1}} D \circ \frac {\psi H_1 \dots H_ {r-1}} {\bar \psi}
\end{equation}
found in \cite{zibsok}.  

Notice that Theorem~\ref{invdr} can be applied not only to Liouville-type equations but also to some equations~\eqref{hyper} without nontrivial $x$-integrals. In the latter case, $W$ is an arbitrary function of $x$.  

\begin{example}\label{dni}
We have $H_1=0$ and $\psi = u_1$ for any equation of the form $u_{xy}= \eta(y,u,u_y)\, u_x$. In this case, ${\cal M}$ is the operator of multiplication by $u_1$. This operator maps arbitrary function of $x$ to a symmetry of this equation, while the equation generically is not Darboux integrable (admits nontrivial $x$-integrals not for all $\eta$ in accordance with \cite{zib,star1}).
\end{example}

\begin{definition} An operator $M =\sum_{i=0}^{k} \xi_i \langle u \rangle D^i$, $\xi_k \ne 0$ is called an {\it $x$-symmetry driver} if $k \ge 0$ and $u_t = M (W)$ is a symmetry of equation \eqref{hyper} for any $x$-integral $W$. The integer $k$ is called {\it order} of the driver $M$.
\end{definition}
The above definition remains applicable even if \eqref{hyper} admits only trivial $x$-integrals (see Example~\ref{dni}). It is clear that for a driver $M$ the operator $M\circ D^i \circ W$ is a driver for any $i \ge 0$ and any $x$-integral $W$ (in particular, $W$ can be equal to $1$).

\begin{lemma}\label{dss} The coefficients of any $x$-symmetry driver do not depend on  $\bar{u}_j$ for all $j>0$. The leading coefficient belongs to the kernel of the operator $\bar D - F_{u_1}$.
\end{lemma}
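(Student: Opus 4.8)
The plan is to substitute $g = M(W) = \sum_{i=0}^k \xi_i D^i(W)$ into the symmetry equation $L(g)=0$, where $L := D\bar D - F_{u_1} D - F_{\bar u_1}\bar D + F_u$, and to exploit two structural facts. First, $D$ and $\bar D$ commute, so that $\bar D\big(D^i(W)\big) = D^i\big(\bar D(W)\big) = 0$ for every $x$-integral $W$; this already gives $\bar D(g) = \sum_i \bar D(\xi_i)\,D^i(W)$. Second, $\bar D$ strictly raises the order in the $\bar u$-variables, while $D$ and multiplication by $F_{u_1}, F_{\bar u_1}, F_u$ do not raise it above one. Expanding $L(g)$ and collecting by powers of $D^j(W)$, a direct calculation yields
\[
L(g) = \sum_{j=0}^{k+1}\Big( L(\xi_j) + (\bar D - F_{u_1})(\xi_{j-1})\Big)\,D^j(W),
\]
with the convention $\xi_i = 0$ for $i<0$ and $i>k$.

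The next step is to argue that each coefficient in this sum vanishes separately. Since a driver acts on every $x$-integral, I would choose $W$ suitably: if the equation possesses a nontrivial $x$-integral $w$, then $w, D(w), D^2(w),\dots$ have strictly increasing orders in the $u$-variables, so for $W$ of order exceeding the orders of all the coefficients the top variable $u_{p+j}$ isolates $D^j(W)$; if only trivial integrals $W(x)$ exist, the derivatives $D^j(W)=W^{(j)}(x)$ can be prescribed independently at a point. Either way the functions $D^j(W)$ are independent, whence
\[
L(\xi_j) + (\bar D - F_{u_1})(\xi_{j-1}) = 0, \qquad j = 0,1,\dots,k+1.
\]

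Taking $j=k+1$ gives $(\bar D - F_{u_1})(\xi_k)=0$, which is the second assertion of the lemma. To deduce $\xi_k \in {\cal F}$, I would compare $\bar u$-orders in $\bar D(\xi_k) = F_{u_1}\xi_k$: if $\xi_k$ depended on some $\bar u_m$ with $m\ge 1$, the left side would contain $\bar u_{m+1}$ through the term $\bar u_{m+1}\partial_{\bar u_m}\xi_k$, whereas the right side has $\bar u$-order at most $m$, a contradiction; hence $\xi_k \in {\cal F}$. Then I would descend by downward induction on $j$: assuming $\xi_j\in{\cal F}$, the relation $(\bar D - F_{u_1})(\xi_{j-1}) = -L(\xi_j)$ has right-hand side of $\bar u$-order at most $1$ (since $\bar D$ applied to a function of ${\cal F}$, the action of $D$, and multiplication by $F_{u_1}, F_{\bar u_1}, F_u$ all produce $\bar u$-order at most $1$), while if $\xi_{j-1}$ depended on $\bar u_m$ with $m\ge 1$ its image under $\bar D - F_{u_1}$ would have $\bar u$-order $m+1\ge 2$; this forces $\xi_{j-1}\in{\cal F}$ and completes the induction.

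I expect the main obstacle to be the justification of passing from $L(g)=0$ to the vanishing of each coefficient, i.e.\ making precise the independence of the $D^j(W)$ uniformly for a single well-chosen $x$-integral, and in particular covering the degenerate case of equations admitting only trivial $x$-integrals (Example~\ref{dni}). The remaining $\bar u$-order bookkeeping is routine once one records that $\bar D$ raises the $\bar u$-order by exactly one on functions genuinely depending on the $\bar u$-variables, whereas $D$ and the coefficients $F_{u_1}, F_{\bar u_1}, F_u$ never raise it above one.
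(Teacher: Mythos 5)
Your proposal is correct and follows essentially the same route as the paper: substitute $M(W)$ into the symmetry condition, collect the coefficients of the derivatives of $W$ to obtain the triangular system $(\bar D - F_{u_1})(\xi_k)=0$ and $(\bar D - F_{u_1})(\xi_{j-1}) = -L(\xi_j)$ (the paper's relations~\eqref{lattr}, up to a sign typo on the $F_u$ term in the paper itself), and then run the downward induction on the $\bar u$-order of the coefficients. The only difference is that the paper sidesteps the independence issue you flag as your main obstacle by taking $W=g(x)$ an arbitrary trivial $x$-integral from the outset (admissible since, by Definition~2.1, every function of $x$ is an $x$-integral and drivers act on all $x$-integrals), which is precisely your second branch; hence your case distinction, and in particular the somewhat delicate nontrivial-integral branch with its triangularity argument, is superfluous.
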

\begin{proof} Let $M =\sum_{i=0}^{k} \xi_i \langle u \rangle D^i$ be an $x$-symmetry driver. Then 
\[ (D \bar D - F_{u_1} D - F_{\bar u_1} \bar D -F_u) (M(g))=0 \]
for arbitrary function $g(x)$. Collecting the coefficients of $g^{(i)}$ in this identity, we arrive at the following relations:
\begin{equation}\label{lattr}
\begin{split} 
(\bar D - F_{u_1})(\xi_k) & =0, \\[3mm]
(\bar D -F_{u_1}) (\xi_{i-1}) & = (F_u + F_{u_1} D + F_{\bar u_1} \bar D - D \bar D) (\xi_i), \quad 1 \le i \le k. 
\end{split} 
\end{equation}
It follows from the first relation that $\xi_k$ does not depend on derivatives of $u$ with respect to $y$, while the second one implies that  $(\xi_{i-1})_{\bar{u}_j}=0$ for all  $j>0$ if  $\xi_i$ has the same property. 
\end{proof}

\begin{corollary}\label{minm} Let $M$ and $\tilde M$ be $x$-symmetry drivers of the smallest possible order for an equation of the form~\eqref{hyper}. Then $\tilde M = M \circ W$, where $W$ is an $x$-integral of~\eqref{hyper}.
\end{corollary}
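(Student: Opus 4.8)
The plan is to exploit the constraint that Lemma~\ref{dss} places on the leading coefficients of the two drivers and then to finish with a minimality argument. Write $M = \sum_{i=0}^{k} \xi_i D^i$ and $\tilde M = \sum_{i=0}^{k} \tilde\xi_i D^i$, where $k$ is the smallest order of an $x$-symmetry driver for the equation and $\xi_k, \tilde\xi_k \ne 0$. By Lemma~\ref{dss} all coefficients lie in ${\cal F}$, and both leading coefficients satisfy $(\bar D - F_{u_1})(\xi_k) = 0$ and $(\bar D - F_{u_1})(\tilde\xi_k) = 0$.

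First I would form the candidate $W = \tilde\xi_k/\xi_k$. A short computation with the quotient rule gives $\bar D(W) = (\bar D(\tilde\xi_k)\,\xi_k - \tilde\xi_k\,\bar D(\xi_k))/\xi_k^2 = (F_{u_1}\tilde\xi_k\,\xi_k - \tilde\xi_k F_{u_1}\xi_k)/\xi_k^2 = 0$, where I used that both leading coefficients lie in $\ker(\bar D - F_{u_1})$. Hence $W$ is an $x$-integral. By the remark preceding the corollary, $M \circ W$ is then again an $x$-symmetry driver, and by the Leibniz rule its leading coefficient equals $\xi_k W = \tilde\xi_k$, the same as that of $\tilde M$.

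Next I would consider $N := \tilde M - M \circ W$. For any $x$-integral $V$ the product $W V$ is again an $x$-integral (since $\bar D(W V)=0$), so $M(W V)$ is a symmetry; as $\tilde M(V)$ is also a symmetry and the $x$-symmetries form a linear space, $N(V) = \tilde M(V) - M(W V)$ is a symmetry as well. Thus $N$ maps every $x$-integral to a symmetry and has coefficients in ${\cal F}$. But the leading terms of $\tilde M$ and $M \circ W$ cancel, so the order of $N$ is strictly less than $k$. If $N \ne 0$ it would be an $x$-symmetry driver of order less than $k$, contradicting the minimality of $k$; hence $N = 0$, that is, $\tilde M = M \circ W$.

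The argument is short, so the only delicate points are making the two structural inputs precise rather than any hard computation. One is the legitimacy of dividing by $\xi_k$: this is carried out locally on the open set where $\xi_k \ne 0$, which is the same setting in which the coefficients are treated as functions of the variables~\eqref{hypdyn}. The other is that the difference of two drivers is genuinely a driver whenever it is nonzero, which rests on the facts, already recorded in the text, that the $x$-symmetries form a linear space and that products of $x$-integrals are $x$-integrals. Once these are granted, the minimality of $k$ forces $N=0$ and no further work is needed.
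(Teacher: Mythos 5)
Your proof is correct and takes essentially the same route as the paper's: both set $W=\tilde\xi_k/\xi_k$, use Lemma~\ref{dss} to show $\bar D(W)=0$ (the paper writes this same quotient-rule computation as $\bar D(\log\tilde\xi_q-\log\xi_q)=0$), and then annihilate $\tilde M - M\circ W$ by the minimality of the order. Your only addition is spelling out explicitly why the difference operator still maps $x$-integrals to symmetries (closure of integrals under multiplication and linearity of the symmetry space), a step the paper leaves implicit.
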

\begin{proof} 
Let $M$ and $\tilde M$ have order $q$ and their leading coefficients be denoted by $\xi_q$ and $\tilde \xi_q$, respectively. Both $\bar D (\log \xi_q)$ and $\bar D(\log \tilde \xi_q)$ are equal to $F_{u_1}$ by Lemma~\ref{dss}. Therefore,  $\bar D (\log \tilde \xi_q - \log \xi_q) = 0$ and $\tilde \xi_q = W \xi_q$, where $W \in \ker \bar D$. This means that the operator $\tilde M - M \circ W$ has order less than $q$. 

On the other hand, the last operator maps $x$-integrals into symmetries. But \eqref{hyper} admits no $x$-symmetry drivers of order less than $q$ by the assumption of Corollary. Hence, all the coefficients of $\tilde M - M \circ W$ must be equal to zero.
\end{proof}

According to Theorem~\ref{invdr}, the operator ${\cal M}$ is an $x$-symmetry driver of order $r-1$. As it is demonstrated in \cite{star1}, $H_j=0$ for some positive $j \le r$ if \eqref{hyper} admits an $x$-symmetry driver of order $r-1$. This means that \eqref{hyper} admits no driver of order less than $r-1$ if $H_{r-1} \ne 0$. Therefore, ${\cal M}$ is a driver of the smallest order. In addition, relation \eqref{dpsi} defines the function $\psi$ up to multiplication by $x$-integrals. Hence, the operator $\cal M$ is defined up to transformations ${\cal M} \rightarrow {\cal M} \circ W$, where $W$ is an $x$-integral. Comparing this with Corollary~\ref{minm} and taking Lemma~\ref{dss} and Theorem~\ref{invdr} into account, 
we obtain the following statement.

\begin{proposition}\label{calmm}
Let equation~\eqref{hyper} admit $x$-symmetry drivers. Then $H_r=0$ for some $r>0$, the kernel of $\bar D - F_{u_1}$ contains non-zero elements and any $x$-symmetry driver of the smallest order is defined by the formula~\eqref{MM} with some $\psi \in \ker (\bar D - F_{u_1})$.
\end{proposition}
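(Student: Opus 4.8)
The plan is to assemble the proposition from three ingredients already in place: the construction of Theorem~\ref{invdr}, the structural constraints of Lemma~\ref{dss}, and the order obstruction quoted from \cite{star1}, with Corollary~\ref{minm} used to pin down uniqueness. I would begin by fixing an $x$-symmetry driver $M = \sum_{i=0}^q \xi_i D^i$ of the smallest possible order $q$, which exists by hypothesis. Lemma~\ref{dss} tells us that $\xi_q$ is non-zero and lies in $\ker(\bar D - F_{u_1})$, which immediately shows that this kernel contains non-zero elements and settles the second assertion.

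For the first assertion I would invoke the result of \cite{star1}: the existence of a driver of order $q$ forces $H_j=0$ for some positive $j \le q+1$. Hence there is at least one positive index at which a Laplace invariant vanishes; let $r$ denote the smallest such index, so that $H_1, \dots, H_{r-1} \ne 0$ and $H_r = 0$. This both proves that $H_r = 0$ for some $r>0$ and guarantees that the divisions by $H_1, \dots, H_{r-1}$ appearing in~\eqref{MM} are legitimate.

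Next I would identify $r-1$ as the smallest driver order. Choosing any non-zero $\psi \in \ker(\bar D - F_{u_1})$ (for instance $\psi = \xi_q$), Theorem~\ref{invdr} produces from this $\psi$ and this $r$ an $x$-symmetry driver $\cal M$ of order $r-1$, so $q \le r-1$. Conversely, if a driver of order strictly less than $r-1$ existed, \cite{star1} would give $H_j=0$ for some positive $j \le r-1$, contradicting the minimality of $r$; hence $q = r-1$ and $\cal M$ is itself a driver of the smallest order. The one point requiring care is the interplay between the two bounds: the upper bound needs the non-vanishing of $H_1, \dots, H_{r-1}$ so that~\eqref{MM} makes sense, which is precisely why $r$ must be taken minimal, while the lower bound rests entirely on the external obstruction of \cite{star1}. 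This matching is the main obstacle; everything else is bookkeeping.

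Finally, to reach the third assertion I would compare an arbitrary smallest-order driver $\tilde M$ with the operator $\cal M$ just constructed. By Corollary~\ref{minm}, $\tilde M = {\cal M} \circ W$ for some $x$-integral $W$. Since the rightmost factor of~\eqref{MM} is multiplication by $H_{r-1} \cdots H_1 \psi$ (or by $\psi$ when $r=1$), composing on the right with $W$ merely replaces $\psi$ by $\psi W$. A one-line Leibniz computation using $\bar D(W)=0$ and $(\bar D - F_{u_1})(\psi)=0$ shows that $(\bar D - F_{u_1})(\psi W)=0$, so $\psi W \in \ker(\bar D - F_{u_1})$. Thus $\tilde M$ is again of the form~\eqref{MM}, now with the function $\psi W$, which completes the proof.
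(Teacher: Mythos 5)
Your proof is correct and follows essentially the same route as the paper, whose own argument (the paragraph preceding Proposition~\ref{calmm}) likewise combines Theorem~\ref{invdr}, the order obstruction from \cite{star1}, Lemma~\ref{dss} for the kernel element, and Corollary~\ref{minm} with the observation that $\tilde M = \mathcal{M}\circ W$ amounts to replacing $\psi$ by the kernel element $\psi W$. You merely spell out more explicitly than the paper the choice of $r$ minimal with $H_r=0$ and the two-sided matching $q=r-1$, which the paper leaves implicit.
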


The equations of Liouville type also possess differential operators that map symmetries to integrals.

\begin {lemma}\label{lemcl} Let $W (x, y, u, u_1, u_2, \ldots, u_p)$ be an $x$-integral of equation \eqref {hyper}. Then the differential operator $W_*$ maps any symmetry to some $x$-integral.
\end {lemma}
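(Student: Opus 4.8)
The plan is to realise a symmetry $g$ as an evolutionary derivation $\partial_g$ of the algebra of functions $f\langle u\rangle$ that commutes with both total derivatives $D$ and $\bar D$, and to combine this with the identity $\partial_g(W)=W_*(g)$. Granting these two facts, the lemma is immediate: since $W$ is an $x$-integral we have $\bar D(W)=0$, so
\[ \bar D\bigl(W_*(g)\bigr)=\bar D\bigl(\partial_g(W)\bigr)=\partial_g\bigl(\bar D(W)\bigr)=\partial_g(0)=0, \]
which says exactly that $W_*(g)$ is annihilated by $\bar D$, i.e. that it is an $x$-integral.

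First I would define $\partial_g$ as the derivation fixed on the coordinates by $\partial_g(x)=\partial_g(y)=0$, $\partial_g(u_i)=D^i(g)$ and $\partial_g(\bar u_j)=\bar D^j(g)$ (these agree at $i=j=0$ since $u_0=\bar u_0=u$). The identity $\partial_g(W)=W_*(g)$ then holds because an $x$-integral depends only on $x,y,u_0,\dots,u_p$, whence $W_*=\sum_i W_{u_i}D^i$ and $\partial_g(W)=\sum_i W_{u_i}D^i(g)=W_*(g)$. The substantive point is the commutation $[\partial_g,\bar D]=0$. As $[\partial_g,D]$ and $[\partial_g,\bar D]$ are derivations, it is enough to test them on the generators. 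On the pure generators this is automatic, e.g. $[\partial_g,D](u_i)=D^{i+1}(g)-D^{i+1}(g)=0$ and, symmetrically, $[\partial_g,\bar D](\bar u_j)=0$. The symmetry hypothesis enters only in the order-one cross relations: using $\bar D(u_1)=F$, $\partial_g(u_1)=D(g)$ and $[D,\bar D]=0$ one finds $[\partial_g,\bar D](u_1)=\partial_g(F)-D\bar D(g)$, and this vanishes because $\partial_g(F)=F_u\,g+F_{u_1}D(g)+F_{\bar u_1}\bar D(g)$ coincides with $D\bar D(g)$ by the defining relation $D\bar D(g)=F_u\,g+F_{u_1}D(g)+F_{\bar u_1}\bar D(g)$ of a symmetry; the mirror computation gives $[\partial_g,D](\bar u_1)=0$.

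The step I expect to need the most care is propagating these cross relations to all jet orders, since for $i\ge 2$ the expression $[\partial_g,\bar D](u_i)$ involves $\partial_g(D^{i-1}F)$, whose reduction to $D^{i-1}(\partial_g F)$ appears to require $[\partial_g,D]=0$, while the mirror computation appears to require $[\partial_g,\bar D]=0$. The apparent circularity dissolves once one uses that $F$ depends only on the first-order variables $u,u_1,\bar u_1$: then $D^{i-1}F$ contains $\bar u$ only through $\bar u_1$, so $[\partial_g,D]$ enters its reduction solely via $[\partial_g,D](\bar u_1)=0$, which is already known, and symmetrically for the other family. Thus every higher cross relation collapses onto the order-one identities, which I would verify directly. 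Finally, I would remark that $\bar D(W_*(g))=0$ by itself forces $W_*(g)$ to be free of all $\bar u_j$ with $j>0$ (the same differentiation argument used for integrals in Section~\ref{darbsec}), so that $W_*(g)$ is a genuine $x$-integral; equivalently, the commutation $[\partial_g,D]=[\partial_g,\bar D]=0$ is the standard statement that a symmetry of~\eqref{hyper} preserves the commuting pair of total derivatives.
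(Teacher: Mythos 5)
Your proposal is correct and is essentially the paper's own argument: the paper justifies the lemma precisely by the commutation $[\pa_g,\bar D]=0$ (delegating the formal statement to Lemma 1 of \cite{SokSt}), and you simply verify that commutation directly on the jet coordinates, soundly resolving the apparent circularity at higher orders by the observation that $F$, and hence $D^{i}(F)$, involves the $\bar u$-variables only through $\bar u_1$, so that the order-one cross relations suffice. One cosmetic remark: the symmetry relation you use, $D\bar D(g)=F_u\,g+F_{u_1}D(g)+F_{\bar u_1}\bar D(g)$, is the standard linearization and matches the condition written in the proof of Lemma~\ref{dss} (with $-F_u$ in the operator), whereas the displayed definition in Section~\ref{symmetries} has $+F_u$ --- an apparent sign typo in the paper rather than a flaw in your argument.
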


The above Lemma directly follows from Lemma 1 in \cite{SokSt}. Not so formal, Lemma~\ref{lemcl} is valid because the total derivative with respect to $t$ in virtue of any symmetry $u_t = g$ and the operator $\bar D$ commute: $[ \pa_g , \bar D] = 0$.

\begin{corollary}\label{cdl} For any $x$-symmetry driver $M$ and any $x$-integral $W$ the operator
$$ L = W_* \circ M $$
maps $x$-integrals into $x$-integrals again. After rewriting $L$ in the form $\sum \mu_i D^i$,   the coefficients $\mu_i$ of this operator are $x$-integrals.
\end{corollary}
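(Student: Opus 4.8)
The plan is to prove the two assertions in turn, obtaining the first directly from the definitions together with Lemma~\ref{lemcl}, and then pinning down the coefficients by evaluating $L$ on arbitrary trivial $x$-integrals.

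For the first assertion I would reason as follows. Let $\tilde W$ be an arbitrary $x$-integral. Since $M$ is an $x$-symmetry driver, $M(\tilde W)$ is the right-hand side of a symmetry of~\eqref{hyper}; and since $W_*$ sends every symmetry to some $x$-integral by Lemma~\ref{lemcl}, the function $L(\tilde W)=W_*\bigl(M(\tilde W)\bigr)$ is again an $x$-integral. This already establishes that $L$ maps $x$-integrals into $x$-integrals. Before turning to the coefficients I would also record that both $W_*$ and $M$ have coefficients in ${\cal F}$ (for $M$ this is Lemma~\ref{dss}), and that the restriction of $D$ to ${\cal F}$ preserves ${\cal F}$; hence, after the Leibniz expansion needed to rewrite $L=\sum_i \mu_i D^i$, the coefficients $\mu_i$ are well-defined functions of the variables~\eqref{hypdyn} that in no way depend on the argument to which $L$ is applied.

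For the second assertion the idea is to test $L$ on trivial $x$-integrals. Fix an arbitrary function $g(x)$; it is an $x$-integral by Definition~\ref{defint}, and since $g$ depends on $x$ alone one has $D^i\bigl(g(x)\bigr)=g^{(i)}(x)$, which is again a function of $x$ only and is therefore annihilated by $\bar D$. Applying $\bar D$ to the identity $L(g(x))=\sum_i \mu_i\, g^{(i)}(x)$ and using that $L(g(x))$ is an $x$-integral by the first part (so its $\bar D$-image vanishes), every contribution $\mu_i\,\bar D\bigl(g^{(i)}(x)\bigr)$ drops out and there remains
\[
\sum_i \bar D(\mu_i)\, g^{(i)}(x)=0 .
\]
Because $g$ is arbitrary, the quantities $g^{(i)}(x)$ may be prescribed independently at any point of the jet space while the functions $\bar D(\mu_i)$ do not depend on $g$; hence each must vanish, $\bar D(\mu_i)=0$, which is precisely the statement that every $\mu_i$ is an $x$-integral.

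The composition argument and the Leibniz expansion guaranteeing $\mu_i\in{\cal F}$ are routine. The step I expect to require the most care is the coefficient separation in the last display: one must justify that testing against all functions $g(x)$ genuinely decouples the single relation into the individual conditions $\bar D(\mu_i)=0$. This works cleanly exactly because the trivial integrals produce derivatives $g^{(i)}(x)$ that are functions of $x$ alone, so $\bar D$ leaves only the $\bar D(\mu_i)$ terms and no separate commutation lemma between $D$ and $\bar D$ is needed; conceptually, however, this cancellation is nothing but a reflection of $[D,\bar D]=0$.
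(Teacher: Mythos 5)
Your proof is correct and follows exactly the route the paper intends: the paper states Corollary~\ref{cdl} without proof as an immediate consequence of Lemma~\ref{lemcl} (driver sends $x$-integrals to symmetries, $W_*$ sends symmetries to $x$-integrals), and your second part --- testing $L$ on trivial integrals $g(x)$, applying $\bar D$, and decoupling $\sum_i \bar D(\mu_i)\,g^{(i)}(x)=0$ by the arbitrariness of $g$ --- is precisely the standard argument the paper relies on (compare its own footnoted remark that $D^i(a)$ may be treated as independent variables). No gaps; the only cosmetic slip is saying the $\mu_i$ depend on the variables~\eqref{hypdyn} when in fact they lie in ${\cal F}$, i.e.\ depend only on~\eqref{ydynvar}, which your own preceding sentence already establishes.
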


 The most significant operator of this kind is the composition
 \begin{equation} \label{Lop}
 \mathcal{L}= w_* \circ \mathcal{M},
 \end{equation}
 where $w$ is a nontrivial $x$-integral of the smallest order and the operator ${\cal M}$ is given by~\eqref{MM} (i.e., $\cal M$ is an $x$-symmetry driver of the smallest order by Proposition~\ref{calmm}).

\begin {example}\label{exli} Recall that $H_0=H_1 = \exp (u)$ and $H_{2}=0$ in the case of the Liouville equation~\eqref{liou}. In this case, we can set $\psi=1$ and the formula~\eqref{MM} gives us
\begin{equation}\label{mli}
{\cal M} = \exp (-u) D \circ \exp (u) = D + u_1.
\end{equation} 
Since $H_0$ does not vanish, $\ds  \, \, w = u_2 - \frac {1} {2} u_1 ^ 2$ is an $x$-integral of the smallest order (see the last sentence of Theorem~\ref{led}). We have $w_*=D^2 - u_1 D$ and
\begin{equation}\label{lioul} 
{\cal L} = (D^2 - u_1 D)\circ (D + u_1)   = D ^ 3 + 2 w D + D(w).
\end{equation}
In Section~\ref{preg} we show that the operator~\eqref{mli} is pre-Hamiltonian (see~\eqref{ldr}). As it is noted in \cite{zibsok}, the operator~\eqref{lioul} is also pre-Hamiltonian if we change the notation and rewrite this operator as $D^3 + 2 u D + u_1$.
\end{example}

\subsection{Main results}\label{results} 

\begin{theorem}\label{phm} 
Let $\cal M$ be an x-symmetry driver of the smallest order for equation \eqref{hyper}. 
Then the operator ${\cal M}$ is pre-Hamiltonian and, moreover, there exist functions $\gamma_{ij} \in \ker \bar D$ such that
\begin{equation}\label{phmdr}
\big[ {\cal M}(a),{\cal M}(b) \big] = {\cal M}\left( b_* ({\cal M}(a))-a_* ({\cal M}(b)) + \sum_{i=0}^n \sum_{j=0}^n \gamma_{ij}\, D^i(a) D^j(b) \right),  
\end{equation}
for any $a, b \in {\cal F}$. In particular, for any $x$-integrals $f$ and $g$ there exists $\phi \in \ker \bar D$ such that $\big[{\cal M}(f),{\cal M}(g)\big] = {\cal M}(\phi)$.
\end{theorem}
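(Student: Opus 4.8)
```latex
The plan is to start from the general identity~\eqref{sip}, which already isolates the obstruction to pre-Hamiltonicity as the double sum $\sum_{i,j} c_{ij}\, D^i(a)\, D^j(b)$. The whole content of the theorem is that for a \emph{smallest-order} driver this remainder lies in the image of $\cal M$, and in fact can be absorbed by a correction term whose coefficients are $x$-integrals. So the first step is to apply~\eqref{sip} to ${\cal M}=\sum_{i=0}^{r-1}\xi_i D^i$ and compute the $c_{ij}$ explicitly. Here I expect the key simplification to come from Lemma~\ref{dss}: the leading coefficient $\xi_{r-1}$ satisfies $(\bar D - F_{u_1})(\xi_{r-1})=0$, and all coefficients are free of $\bar u_j$ for $j>0$. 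The explicit form~\eqref{MM} of $\cal M$ as a composition of first-order factors $\tfrac{1}{H_\ell}(D-F_{\bar u_1})$ should let me track how the Fréchet derivatives $(\xi_i)_*$ interact with the factors, and I anticipate that the structure $\bar D(\log\xi_{r-1})=F_{u_1}$ forces strong constraints on the $c_{ij}$.

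The second step is the heart of the argument: I want to show that the remainder $R(a,b)=\sum_{i,j} c_{ij}\,D^i(a)\,D^j(b)$ can be written as ${\cal M}\big(\sum_{i,j}\gamma_{ij}\,D^i(a)\,D^j(b)\big)$ with $\gamma_{ij}\in\ker\bar D$. The natural route is to apply the operator $\bar D$ (or rather $\bar D - $ the appropriate weighted version) to both sides and exploit that $\cal M$ intertwines nicely with $\bar D$ on its image. More precisely, since $u_t={\cal M}(W)$ is a symmetry for every $x$-integral $W$, and since $[\partial_g,\bar D]=0$ on symmetries (the fact underlying Lemma~\ref{lemcl}), the image of $\cal M$ is characterized by lying in $\ker$ of the linearized hyperbolic operator; I would use this to recognize $R(a,b)$ as forced to be in $\operatorname{im}{\cal M}$ whenever $a,b$ range over $x$-integrals, and then bootstrap to general $a,b$ by the bilinearity in $D^i(a),D^j(b)$. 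The requirement $\gamma_{ij}\in\ker\bar D$ should emerge because applying $\bar D$ to the candidate identity must annihilate the correction, and $\cal M$ has no kernel issues that would spoil uniqueness of the $\gamma_{ij}$ modulo lower-order terms.

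The last step specializes to the final sentence. If $f$ and $g$ are themselves $x$-integrals, then by Lemma~\ref{dss} and the driver property, ${\cal M}(f)$ and ${\cal M}(g)$ are genuine $x$-symmetries, so the bracket $[{\cal M}(f),{\cal M}(g)]$ is again an $x$-symmetry. The terms $f_*({\cal M}(g))$ and $g_*({\cal M}(f))$ vanish or simplify because $\bar D(f)=\bar D(g)=0$ means the Fréchet derivatives of $f,g$ annihilate directions transverse to $\cal F$; and the double sum, now with $\gamma_{ij}\in\ker\bar D$ and $D^i(f),D^j(g)$ all $x$-integrals, collapses so that the argument $\phi=b_*({\cal M}(a))-a_*({\cal M}(b))+\sum\gamma_{ij}D^i(f)D^j(g)$ is itself an $x$-integral, giving $[{\cal M}(f),{\cal M}(g)]={\cal M}(\phi)$ with $\phi\in\ker\bar D$ as claimed.

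I expect the main obstacle to be the second step: verifying that the specific remainder coefficients $c_{ij}$ produced by the smallest-order driver are divisible by $\cal M$ in the required sense, with quotient coefficients annihilated by $\bar D$. The difficulty is that $\cal M$ is not invertible, so ``dividing'' by it is delicate; the argument must instead characterize its image intrinsically (as symmetries, via the linearized equation) and show $R(a,b)$ lands there. Getting the $\ker\bar D$ condition on $\gamma_{ij}$ — rather than merely $\gamma_{ij}\in{\cal F}$ — will require carefully exploiting that $\cal M$ is built from the Laplace invariants with $H_r=0$ and that its leading coefficient lies in $\ker(\bar D-F_{u_1})$.
```
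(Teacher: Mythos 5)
Your first and last steps line up with the paper: the reduction to the case where $a,b$ are $x$-integrals via bilinearity in $D^i(a)$, $D^j(b)$ is exactly the paper's preliminary remark, and the specialization at the end is essentially correct (with the small repair that $f_*({\cal M}(g))$ and $g_*({\cal M}(f))$ do not vanish in general --- they are $x$-integrals by Lemma~\ref{lemcl}, which is what makes $\phi\in\ker\bar D$). The genuine gap is in your second step, which you yourself flag as the heart. You propose to characterize $\im{\cal M}$ intrinsically as the kernel of the linearized hyperbolic operator, i.e.\ to identify the image of $\cal M$ with the set of symmetries, and then conclude that the remainder $R(f,g)=\sum c_{ij}D^i(f)D^j(g)$, being a symmetry, lies in $\im{\cal M}$. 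That identification is not available: every ${\cal M}(W)$ is a symmetry, but nothing in the hypotheses (which do not even assume nontrivial $x$-integrals exist) guarantees the converse, and a description of the full symmetry algebra of Liouville-type equations is a separate deep result not proved in this paper. Moreover, even if you knew $R(f,g)={\cal M}(\phi)$ for each pair of integrals $f,g$, your bootstrap to arbitrary $a,b$ needs $\phi$ to have the specific bilinear form $\sum\gamma_{ij}D^i(f)D^j(g)$ with coefficients $\gamma_{ij}\in\ker\bar D$ \emph{independent} of $f$ and $g$; ``applying $\bar D$ to both sides'' does not produce this structure, and without it the coefficient-matching argument that extends the identity to all of $\cal F$ cannot run.

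The paper closes this gap with a reduction lemma (Lemma~\ref{red}) whose mechanism is quite different from what you sketch. Fix $g\in\ker\bar D$ and regard $R$ as an operator in $f$: since $R(f,g)$ is a symmetry for all integrals $f$, this operator is itself an $x$-symmetry driver, so by Lemma~\ref{dss} its leading coefficients satisfy $(\bar D-F_{u_1})(c_{qj})=0$; dividing by the leading coefficient $\xi_k$ of ${\cal M}$, which lies in the same kernel, gives $c_{qj}=\theta_j\xi_k$ with $\theta_j\in\ker\bar D$ (the quotient argument of Corollary~\ref{minm}). One then subtracts ${\cal M}\bigl(D^{q-k}\bigl(f\sum_j\theta_j D^j(g)\bigr)\bigr)$ to lower the order in $f$ and iterates, leaving a bilinear symmetry of order $\hat m<k$ in $f$. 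Only at this point does minimality of the order $k$ enter: a nonzero residual would define a driver of order below the smallest possible, a contradiction, so the residual vanishes. Your plan never invokes minimality at any decisive point --- your proposed mechanism, if it worked, would apply verbatim to non-minimal drivers such as ${\cal M}\circ D$, for which the theorem is not claimed --- and your first step (computing the $c_{ij}$ explicitly from the factorization \eqref{MM} and the Laplace invariants) is both unexecuted and unnecessary: the paper's argument uses only the abstract driver property through \eqref{sip} and Lemma~\ref{dss}, and never the explicit $c_{ij}$.
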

Note that $\cal M$ is always defined by formula~\eqref{MM} in accordance with Proposition~\ref{calmm}. 
It should also be emphasized that we do not assume the existence of nontrivial $x$-integrals in Theorem~\ref{phm}. If nontrivial $x$-integrals are absent for equation~\eqref{hyper}, then this theorem means that $\gamma_{ij}$ (as well as $f$, $g$ and $\phi$) are functions of $x$ only.

\begin{remark}[Preliminaries for the proof] 
We can rewrite equation~\eqref{phmdr} in the form 
\begin{equation}\label{br}
\sum_{i=0}^m \sum_{j=0}^m c_{ij} D^i(a) D^j(b) = {\cal M} \left( \sum_{i=0}^n \sum_{j=0}^n \gamma_{ij} D^i(a) D^j(b) \right), 
\end{equation}
where the functions $c_{ij}$ are defined by~\eqref{sip} for $\cal M$. The functions $a$ and $b$ are arbitrary here. Therefore, $D^i(a)$ and $D^j(b)$ 
can be considered as independent variables\footnote{I.e. a polynomial in these variables is equal to zero if and only if all the coefficients of this polynomial are zero.}. Collecting the coefficients at these variables in the right-hand side of~\eqref{br} and equating them to $c_{ij}$, we see that formula~\eqref{phmdr} is equivalent to relations between the coefficients of the operator $\cal M$ (which, in particular, determine $c_{ij}$) and the functions $\gamma_{ij}$. These relations do not contain $a$ and $b$ in any way and therefore remain unchanged if we prove~\eqref{br} for $a$ and $b$ belonging to an appropriate subset of $\cal F$ (such that $D^i(a)$ and $D^j(b)$ continue to play the role of independent variables when $a$ and $b$ are arbitrary elements of this subset). Thus, it is enough for the proof of Theorem~\ref{phm} to demonstrate that \eqref{phmdr} holds for the functions $a$ and $b$ of a special form. We take arbitrary $x$-integrals as such a special form for $a$ and $b$.
\end{remark}

For further reasoning, it is convenient to prove the following lemma first.

\begin{lemma}\label{red} Let $M$ 
be an $x$-symmetry driver of order $k$ for equation~\eqref{hyper} and an expression  
$$\mathfrak{N}=\sum_{i=0}^q \sum_{j=0}^\ell c_{ij}  D^i(f) D^j(g), \quad c_{ij} \in {\cal F}, \quad q \ge k,
$$
be a symmetry of~\eqref{hyper} for arbitrary $x$-integrals $f$ and $g$. Then there exist $x$-integrals   $\theta_j$ such that the expression
\[ \tilde{\mathfrak{N}} = \mathfrak{N} - M \left( D^{q-k} ( f \sum_{j=0}^{\ell}\theta_j D^j(g) ) \right) \]
has the form $\sum_{i=0}^{q-1} \sum_{j=0}^{\tilde{\ell}} \tilde{c}_{ij} \, D^i(f) D^j(g)$
and is a symmetry of~\eqref{hyper} for any $x$-integrals $f$ and $g$.
\end{lemma}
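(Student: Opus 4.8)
The plan is to subtract from $\mathfrak{N}$ a single term in the image of $M$ that kills the top $x$-derivative of $f$ while keeping the whole expression a symmetry for all $x$-integrals $f,g$. The key observation is that $M$ is a symmetry driver, so $M(\text{anything that is an }x\text{-integral})$ is automatically a symmetry; hence if I can arrange the subtracted argument $D^{q-k}\bigl(f\sum_j \theta_j D^j(g)\bigr)$ to be an $x$-integral whenever $f$ and $g$ are, then $\tilde{\mathfrak{N}}$ will be a difference of two symmetries and therefore a symmetry, and the reduction to order $q-1$ in $D^i(f)$ becomes a purely algebraic matching of leading coefficients. Since products, $x$-derivatives $D(\cdot)$, and $D^{q-k}$ of $x$-integrals are again $x$-integrals (by Proposition~\ref{intst} and the observation that $\bar D$ annihilates them), $f\sum_j \theta_j D^j(g)$ is an $x$-integral as soon as each $\theta_j$ is, so the choice of $x$-integral multipliers $\theta_j$ is exactly what preserves the symmetry property.

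First I would compute the leading behavior of $M\bigl(D^{q-k}(fh)\bigr)$ in the independent variables $D^i(f)$, where $h=\sum_{j=0}^{\ell}\theta_j D^j(g)$. Writing $M=\sum_{i=0}^k \xi_i D^i$ with leading coefficient $\xi_k$, the highest $x$-derivative of $f$ produced is $D^{q}(f)$, arising from $\xi_k D^k D^{q-k}(fh)=\xi_k D^q(fh)$, whose top term (treating $D^i(f)$ as independent variables in the sense of the Remark) is $\xi_k h\, D^q(f)$ plus lower-order contributions in $D^i(f)$. In $\mathfrak{N}$ the coefficient of $D^q(f)$ is $\sum_{j=0}^\ell c_{qj} D^j(g)$. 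Matching these two, I would set each $\theta_j$ so that $\xi_k \theta_j$ reproduces $c_{qj}$; concretely $\theta_j = c_{qj}/\xi_k$ after verifying this ratio is an $x$-integral.

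The main obstacle, and the step deserving the most care, is precisely showing that the required multipliers $\theta_j=c_{qj}/\xi_k$ are genuinely $x$-integrals (i.e.\ lie in $\ker\bar D$), not merely elements of $\cal F$. Here is where the hypothesis that $\mathfrak{N}$ is a symmetry for \emph{all} $x$-integrals $f,g$ must be exploited: applying the symmetry-defining operator $D\bar D - F_{u_1}D - F_{\bar u_1}\bar D - F_u$ to $\mathfrak{N}$ and collecting the coefficient of the highest independent variable $D^q(f)$ should force a relation of the form $(\bar D - F_{u_1})(c_{qj})=0$ modulo lower terms, which combined with $(\bar D - F_{u_1})(\xi_k)=0$ from Lemma~\ref{dss} yields $\bar D(c_{qj}/\xi_k)=0$. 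I expect this to parallel exactly the computation \eqref{lattr} in the proof of Lemma~\ref{dss}, with $\xi_k$ playing the role of the annihilated leading coefficient.

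Once the $\theta_j$ are shown to be $x$-integrals, the conclusion is immediate: $f\,h$ is an $x$-integral, so $D^{q-k}(fh)$ is too, so $M\bigl(D^{q-k}(fh)\bigr)$ is a symmetry, hence $\tilde{\mathfrak{N}}$ is a symmetry; and by the leading-coefficient matching the term in $D^q(f)$ cancels, leaving $\tilde{\mathfrak{N}}=\sum_{i=0}^{q-1}\sum_{j=0}^{\tilde\ell}\tilde c_{ij} D^i(f) D^j(g)$ for suitable $\tilde c_{ij}\in\cal F$ and some $\tilde\ell$ (possibly larger than $\ell$, since $D^{q-k}$ and the operator $M$ raise the $g$-degree). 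This last point—that $\tilde\ell$ need not equal $\ell$—is worth stating explicitly but causes no difficulty, as the lemma only asserts reduction in the $f$-order $q$.
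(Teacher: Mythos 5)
Your proposal is correct and follows essentially the same route as the paper: where you derive $(\bar D - F_{u_1})(c_{qj})=0$ by applying the symmetry-defining operator directly to $\mathfrak{N}$ and collecting top coefficients (i.e., by inlining the computation \eqref{lattr}), the paper packages the identical step by noting that, for fixed $g\in\ker\bar D$, the operator $\sum_{i=0}^{q}\bigl(\sum_{j}c_{ij}D^j(g)\bigr)D^i$ is itself an $x$-symmetry driver and invoking Lemma~\ref{dss}; it then factors $c_{qj}=\theta_j\xi_k$ with $\theta_j\in\ker\bar D$ and subtracts $M\bigl(D^{q-k}\bigl(f\sum_j\theta_j D^j(g)\bigr)\bigr)$ exactly as you do. The one harmless imprecision is that the relation $(\bar D-F_{u_1})(c_{qj})=0$ appears as the coefficient of $D^{q+1}(f)D^j(g)$ rather than of $D^q(f)$, since the symmetry-defining operator contains $D\bar D$ and so raises the $f$-order by one.
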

In simpler words, the driver $M$ allows us to reduce the order $q$ of $\mathfrak{N}$ 
without loosing the other property of $\mathfrak{N}$. Therefore, Lemma~\ref{red} remains applicable to the reduced expression $\tilde{\mathfrak{N}}$ if its order $\tilde q$ is greater than $k-1$.

\begin{proof}[Proof of Lemma.] The assumptions of the lemma imply that the operator $R = \sum_{i=0}^{q} \mathcal{C}_i(g) D^i$, where $\mathcal{C}_i(g)= \sum_{j=0}^\ell c_{ij} D^j(g)$, is an $x$-symmetry driver for any $g \in \ker \bar D$. According to Lemma~\ref{dss}, we have 
\[ (\bar D - F_{u_1})(\mathcal{C}_q(g))= \sum_{j=0}^\ell ( \bar D - F_{u_1})(c_{qj}) D^j(g) = 0 .\]
Since the integral $g$ is arbitrary, we get $(\bar D - F_{u_1})(c_{qj})=0$. The leading coefficient $\xi_k$ of the driver $M$ also belongs to the kernel of the operator 
$\bar D - F_{u_1}$. Therefore $c_{qj}=\theta_j \xi_k$, where $\theta_j \in \ker \bar D$ (see the proof of Corollary~\ref{minm} if a more detailed explanation is needed). The last equalities imply that
\[ M\left( D^{q-k}\left(f \sum_{j=0}^{\ell}\theta_j D^j(g)\right) \right) = \mathcal{C}_q(g) D^q(f) + \dots,\]
where the dots denote terms with  $D^i(f)$, $i < q$. Since the $M \circ D^{q-k}$ is a driver, the above expression is a symmetry for any $f, g \in \ker \bar D$. Subtracting this symmetry from  $\mathfrak{N}$, we complete the proof.  
\end{proof}

\begin{proof}[Proof of Theorem~\ref{phm}] As $\cal M$ is an $x$-symmetry driver, the functions ${\cal M}(g)$ and ${\cal M}(f)$ are symmetries of \eqref{hyper} for any $x$-integrals $f$ and $g$. Lemma~\ref{lemcl} implies that   $f_*({\cal M}(g))$ and $g_*({\cal M}(f))$ are $x$-integrals (in particular, are zero if \eqref{hyper} admits trivial $x$-integrals only). Therefore, ${\cal M}\Big( g_*({\cal M}(f)) - f_*({\cal M}(g))\Big)$ is a symmetry. In addition, $\big[{\cal M}(f),{\cal M}(g)\big]$ is also a symmetry since $x$-symmetries forms a Lie algebra with respect to the bracket \eqref{Lie}. Thus, substituting $\cal M$, $f$, $g$ for $M$, $a$, $b$ in formula~\eqref{sip}, we obtain that 
\begin{equation}\label{nstart}
\sum_{i=0}^m \sum_{j=0}^m c_{ij} D^i(f) D^j(g) 
\end{equation}
is an $x$-symmetry for any $x$-integrals $f$ and $g$.

By the assumption of Theorem~\ref{phm}, $\mathcal{M}$ is a driver of smallest order. Let us denote this order by $k$ (recall that $k=r-1$ in terms of Proposition~\ref{calmm}). Starting from~\eqref{nstart}, we apply Lemma~\ref{red} several times and arrive at the relation  
\[ [{\cal M}(f),{\cal M}(g)] - {\cal M}\left( g_* ({\cal M}(f))-f_* ({\cal M}(g)) + \sum_{i=0}^n \sum_{j=0}^n \gamma_{ij} D^i(f) D^j(g) \right) = \sum_{i=0}^{\hat{m}} \sum_{j=0}^{\hat{\ell}} \varsigma_{ij} \, D^i(f) D^j(g) , \] 
where $\gamma_{ij} \in \ker \bar D$ and the non-negative integer $\hat{m}<k$. If $k=0$, then the right-hand side of the above relation is zero because we can completely absorb \eqref{nstart} into the image of $\cal M$ by using Lemma~\ref{red}. If $k>0$ and there are non-zero coefficients $\varsigma_{ij}$, then the right-hand side defines a driver for arbitrary $g \in \ker \bar D$. But the order of this driver is less that the smallest order $k$ and, therefore, all the coefficients $\varsigma_{ij}$ must be equal to zero.\end{proof}

We have proved that the operator ${\cal M}$ is pre-Hamiltoninan with respect to the bracket \eqref{Lie} defined on functions from $\cal F$.

It turns out that there exists a pre-Hamiltonian operator in a set of variables other than~\eqref{ydynvar} if equation~\eqref{hyper} admits nontrivial $x$-integrals in addition to $x$-symmetry drivers.\footnote{As it is demonstrated in \cite{star1}, equation \eqref{hyper} is a Liouville-type equation if it admits both $x$-symmetry drivers and nontrivial $x$-integrals.} 

Let $w$ be an $x$-integral of smallest order (see Proposition \ref{intst}). 
Consider the operator ${\cal L}$ defined by \eqref{Lop}.  
According to Corollary \ref{cdl}, all coefficients of ${\cal L}$ are function of $x$, $w$, $w_i\stackrel{def}{=}D^i(w)$. 

\begin{theorem}\label{pgl} The operator ${\cal L}$ is pre-Hamiltonian with respect to the bracket 
\[ \{a,b\}= \sum_{i=0}^{+\infty} \left( \frac{\partial b}{\partial w_i} D^i(a) - \frac{\partial a}{\partial w_i} D^i(b) \right) \]
defined on functions of the variables $x$, $w$, $w_1$, $\dots$.
\end{theorem}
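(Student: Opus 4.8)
The plan is to transplant the entire evolutionary formalism of Section~\ref{preg} from the jet variables $u_i$ to the new jet variables $w_i$, and then to deduce the assertion from Theorem~\ref{phm} by a restriction argument. The starting point is that, by the construction around~\eqref{genint1} together with Proposition~\ref{intst}, the functions of $x,w,w_1,\dots$ are exactly the $x$-integrals of~\eqref{hyper}; I denote this space by $\mathcal I$. On $\mathcal I$ the operator $D$ acts as a genuine total derivative in the variables $w_i$, since $D(w_i)=w_{i+1}$, so the whole construction of Section~\ref{preg} applies verbatim with $u_i$ replaced by $w_i$. In particular, writing $\pa_V^w$ for the evolutionary derivation determined by $\pa_V^w(w_i)=D^i(V)$, the bracket in the statement is $\{a,b\}=b_*^w(a)-a_*^w(b)$ with $b_*^w=\sum_i (\pa b/\pa w_i)D^i$, and the analogue of~\eqref{comm} holds on $\mathcal I$:
\[ [\pa_a^w,\pa_b^w]=\pa_{\{a,b\}}^w . \]

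The key step is an intertwining relation between the two kinds of flows. For any $x$-integral $V$ the function $\mathcal M(V)$ is a symmetry, so $\pa_{\mathcal M(V)}$ commutes with $D$; using $\pa_g(w)=w_*(g)$ for $w\in\mathcal F$ and $w_i=D^i(w)$, I compute
\[ \pa_{\mathcal M(V)}(w_i)=D^i\big(w_*(\mathcal M(V))\big)=D^i(\mathcal L(V))=\pa^w_{\mathcal L(V)}(w_i) . \]
Here $\mathcal L(V)\in\mathcal I$ by Corollary~\ref{cdl}, so $\pa^w_{\mathcal L(V)}$ is a legitimate $w$-derivation that preserves $\mathcal I$. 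Since both derivations agree on the generators $x$ and $w_i$ of $\mathcal I$, I get $\pa_{\mathcal M(V)}|_{\mathcal I}=\pa^w_{\mathcal L(V)}|_{\mathcal I}$. In particular $\pa_{\mathcal M(V)}$ maps $\mathcal I$ into $\mathcal I$, so the restriction to $\mathcal I$ of a composition of such symmetry flows equals the composition of the corresponding restricted $w$-flows.

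Now I combine everything. For $x$-integrals $a,b$, the final clause of Theorem~\ref{phm} supplies an $x$-integral $\phi$ with $[\mathcal M(a),\mathcal M(b)]=\mathcal M(\phi)$. Applying~\eqref{comm} in the $u$-variables, restricting to $\mathcal I$, and using the intertwining on both sides gives
\[ \pa^w_{\{\mathcal L(a),\mathcal L(b)\}}\big|_{\mathcal I}=[\pa^w_{\mathcal L(a)},\pa^w_{\mathcal L(b)}]\big|_{\mathcal I}=[\pa_{\mathcal M(a)},\pa_{\mathcal M(b)}]\big|_{\mathcal I}=\pa_{\mathcal M(\phi)}\big|_{\mathcal I}=\pa^w_{\mathcal L(\phi)}\big|_{\mathcal I} . \]
Finally, the map $V\mapsto\pa^w_V$ is injective because $V=\pa^w_V(w)$ recovers the characteristic; hence $\{\mathcal L(a),\mathcal L(b)\}=\mathcal L(\phi)$ with $\phi\in\mathcal I$. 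Because $\mathcal I$ is precisely the domain of the bracket $\{\,,\,\}$, this is exactly the pre-Hamiltonian property of $\mathcal L$, and no separate extension from $x$-integrals to ``arbitrary $a,b$'' is required.

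The step I expect to be the main obstacle is the rigorous justification of the intertwining: one must verify carefully that $\pa_{\mathcal M(V)}$ commutes with $D$ (this is where the symmetry property of $\mathcal M(V)$, guaranteed by $\mathcal M$ being a driver, is essential) and that it genuinely preserves $\mathcal I$, so that the restriction of the $u$-commutator to $\mathcal I$ coincides with the commutator of the $w$-flows. The accompanying point needing care is the identification of functions of $x,w,w_1,\dots$ with $x$-integrals, which is what legitimises transporting the commutation rule~\eqref{comm} to the $w$-jets in the first place.
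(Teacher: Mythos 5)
Your proposal is correct and takes essentially the same route as the paper's proof: both hinge on the final clause of Theorem~\ref{phm}, the identity $w_*(\mathcal{M}(\phi))=\mathcal{L}(\phi)$, and the observation that $\partial_{\mathcal{M}(V)}$ coincides with the $w$-jet flow $\partial_{\mathcal{L}(V)}$ on functions of $x,w,w_1,\dots$ --- your ``intertwining relation''. Your additional scaffolding (the space $\mathcal{I}$, injectivity of $V\mapsto\partial^w_V$) merely makes explicit what the paper compresses into applying the commutator $[\partial_{\mathcal{M}(f)},\partial_{\mathcal{M}(g)}]$ to $w$ and invoking the functional independence of $x,w,w_1,\dots$ (the one nuance worth noting is that $[\partial_g,D]=0$ holds on $\mathcal{F}$ for \emph{any} $g\in\mathcal{F}$, not only for symmetries; the symmetry property is what you need for $\partial_{\mathcal{M}(V)}$ to preserve $x$-integrals, via Lemma~\ref{lemcl}).
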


It was already noted in \cite{zibsok} that the operator $\cal L$ is pre-Hamiltonian for many (all checked) examples of Liouville-type equations. In other words, the work~\cite{zibsok}, in fact, contains the above theorem as a conjecture. By using Theorem~\ref{phm}, we can now prove Theorem~\ref{pgl}. The proof below belongs to V.~V.~Sokolov.

\begin{proof} Consider the evolution equations $u_{t_1}=\mathcal{M}(f)$ and $u_{t_2}=\mathcal{M}(g)$, where $f$ and $g$ are arbitrary function of the form~\eqref{genint1} (i.e. $f$ and $g$ are $x$-integrals). Recall (see Section~\ref{preg}) that differentiations with respect to $t_1$ and $t_2$ in virtue of these equation are denoted by $\partial_{\mathcal{M}(f)}$ and $\partial_{\mathcal{M}(g)}$. Let us apply the commutator $[\partial_{\mathcal{M}(f)},\partial_{\mathcal{M}(g)}]$ to the $x$-integral $w$ of smallest order. Taking \eqref{comm} and Theorem~\ref{phm} into account, we obtain
\[ [\partial_{\mathcal{M}(f)},\,\partial_{\mathcal{M}(g)}] (w)= w_* \left(\big[\mathcal{M}(f),\,\mathcal{M}(g)\big] \right) = w_* (\mathcal{M}(\phi)) = \mathcal{L}(\phi) \]
for some $x$-integral $\phi$ (which is a function of $x$, $w$, $w_i$ by Proposition~\ref{intst}). 

On the other hand, we have 
$$ w_{t_1} = \partial_{\mathcal{M}(f)}(w) = \mathcal{L}(f), \qquad  w_{t_2} = \partial_{\mathcal{M}(g)}(w) = \mathcal{L}(g).$$
This means that the differentiations $\partial_{\mathcal{M}(f)}$ and $\partial_{\mathcal{M}(g)}$ respectively coincide with $\partial_{\mathcal{L}(f)}$ and $\partial_{\mathcal{L}(g)}$ on functions of $x$, $w$, $w_i$. Since $\mathcal{L}(f)$ and $\mathcal{L}(g)$ are function of $x$, $w$, $w_i$ by Corollary~\ref{cdl}, we have
\[ [\partial_{\mathcal{M}(f)},\partial_{\mathcal{M}(g)}] (w) = \partial_{\mathcal{M}(f)} \left({\cal L} (g)\right) - \partial_{\mathcal{M}(g)} \left({\cal L} (f)\right) = \partial_{\mathcal{L}(f)} \left({\cal L} (g)\right) - \partial_{\mathcal{L}(g)} \left({\cal L} (f)\right) = \{\mathcal{L}(f),\mathcal{L}(g)\} . \]

Thus, we arrive at the relation
\begin{equation}\label{pregl}
\{\mathcal{L}(f),\,\mathcal{L}(g)\} = \mathcal{L}(\phi)
\end{equation}
that holds when $w_i$ are functions from $\cal F$. Since the functions $x$, $w$, $w_1$, $\dots$ $\in {\cal F}$ are functionally independent, the relation~\eqref{pregl} holds identically (i.e. without substituting $w_i$ with their values from $\cal F$). \end{proof}

\section*{Acknowledgements}
The author thanks V.~V.~Sokolov for many useful discussions as well as for suggestions that made 
the text more readable even in this very preliminary version.


\begin{thebibliography}{99}

\bibitem{Olv93} Olver P.J., {\it Applications of Lie Groups
to Differential Equations},
(2nd edn), {\em Graduate Texts in Mathematics},  1993, {\bf 107},
Springer--Verlag, New York.

\bibitem{zibsok} Zhiber A.V. and  Sokolov V.V., {\it 	Exactly integrable hyperbolic equations of Liouville type},  {\em Russ. Math. Surv.}, 2001 {\bf 56}(1), 61--101.

\bibitem{zib} Zhiber A.V., {\it Quasilinear hyperbolic equations with an infinite-dimensional symmetry algebra},  {\em Russ AC SC Izv. Math.}, 1995, {\bf 45}(1), 33--54.

\bibitem{zibsok1} Sokolov V.V. and Zhiber A.V.,
{\it On the Darboux integrable hyperbolic
equation}, {\em Phys. Lett. A}, 1995, {\bf 208}, 303--308.

\bibitem{andkam1} Anderson I. M. and Kamran N., {\it The variational bicomplex
for hyperbolic second-order scalar partial differential equations in the
plane},  {\em Duke Math. J.}, 1997, {\bf 87}(2), 265--319.

\bibitem{star1} Startsev S.Ya., {\it Differential substitutions of the Miura transformation type}, {\em Theoret. and Math. Phys.}, 1998, {\bf 116}(3), 1001--1010.

\bibitem{SokSt} Sokolov  V. V. and Startsev S. Ya., {\it Symmetries of nonlinear hyperbolic systems of the Toda chain type}, {\em Theoret. and Math. Phys.}, 2008, {\bf 155}(2), 802--811. 

\end{thebibliography}
\end{document}